\newtheorem{theorem}{Theorem}
\newtheorem{lemma}[theorem]{Lemma}
\newcommand{\NNI}{\mathrm{NNI}}
\newcommand{\cC}{{\mathcal C}}
\newcommand{\cN}{{\mathcal N}}
\newcommand{\tc}{\ensuremath{\Delta}} 
\newcommand{\tcp}{\ensuremath{\Delta^{+}}} 
\newcommand{\tcm}{\ensuremath{\Delta^{-}}} 
\begin{document}

\begin{frontmatter}



\title{Transforming phylogenetic networks: Moving beyond tree space}


\author[uea]{Katharina T. Huber}
\ead{katharina.huber@cmp.uea.ac.uk}

\author[uea]{Vincent Moulton}
\ead{vincent.moulton@cmp.uea.ac.uk}

\author[uea]{Taoyang Wu}
\ead{taoyang.wu@gmail.com}
\address[uea]{
School of Computing Sciences,University of East Anglia, \\ Norwich,
NR4 7TJ, UK}



\begin{abstract}

Phylogenetic networks are a  generalization of phylogenetic trees that are   used to represent reticulate evolution. Unrooted phylogenetic networks form a special class of such networks, which naturally generalize unrooted phylogenetic trees. In this paper we define two operations on unrooted phylogenetic networks, one of which is a generalization of the well-known nearest-neighbor interchange (NNI) operation on phylogenetic trees. We show that any unrooted phylogenetic network can be transformed into any other such network using only these operations. This generalizes the well-known fact that any phylogenetic tree can be transformed into any other such tree using only NNI operations. It also allows us to define a generalization of tree space and to define some new metrics on unrooted phylogenetic networks. To prove our main results, we employ some fascinating new connections between phylogenetic networks  and cubic graphs that we have recently discovered. Our results should be useful in developing new strategies to search for optimal phylogenetic networks, a topic that has recently generated some interest in the literature, as well as for providing new ways to compare networks.

\end{abstract}

\begin{keyword}

 phylogenetic network \sep local transformation \sep network space \sep NNI operation \sep  network metric




\end{keyword}

\end{frontmatter}


\section{Introduction}

Phylogenetic networks are a
generalization of phylogenetic trees
that are gaining growing acceptance by biologists due to their
importance in representing reticulate evolution
\citep{bap-13}. Certain types of 
networks, such as neighbornets~\citep[see e.g.][]{bryant2004neighbor, huson2006application} and 
median networks~\citep[see e.g.][]{bandelt1995mitochondrial} 
are now commonly used in the literature.
Moreover, there has recently been much focus on 
developing ways to construct special classes of networks
to explicitly model evolution~\citep[see e.g.][]{hus-10,nak-11,gus-14}.
Even so, there are still several aspects of phylogenetic network theory 
that remain to be more fully explored. One
such aspect is how to 
transform one network into another one by using 
a collection of specified network 
operations~\citep[see e.g.][for some 
results in this direction]{cardona2009metrics, yu2014maximum, hub-15}
which we consider in this paper.

\def \eleSize{\fontsize{20}{20.4}}
\begin{figure*}[h]
   \begin{center}
  {\resizebox{1\columnwidth}{!}{{\includegraphics{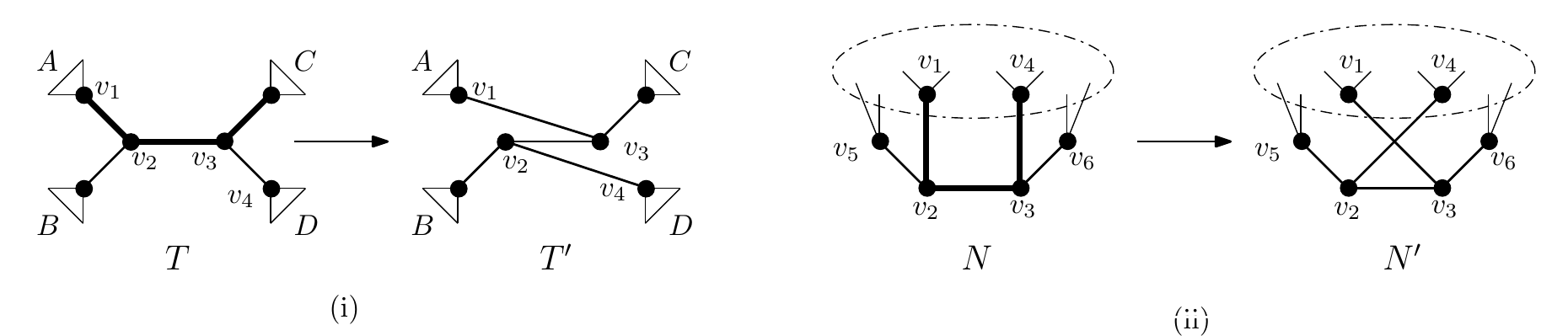}}}  }
  \end{center}
\caption{ (i) An NNI operation on a phylogenetic tree $T$. The tree $T'$ is obtained from $T$ by performing an operation on the path highlighted in bold that results in the subtrees labelled $A$ and $D$ being swapped. (ii) An NNI operation on a network $N$. The network $N'$ is obtained from $N$ by one NNI operation that is performed on the path highlighted in bold, just like for phylogenetic trees. Note that vertices $v_1$, $v_4$, $v_5$ and $v_6$ in $N$ could all have degree 1 or 3, and that $N$ contains neither edge $\{v_1, v_3\}$ nor edge $\{v_2,v_4\}$.
\label{fig:nni}}
\end{figure*}

Transformations of phylogenetic trees 
have been studied for several years,
and have applications to tree search algorithms 
and comparing trees~\citep[cf. e.g. ][Chapters 4 and 30]{felsenstein2004}.
Probably the best known and simplest 
way to transform one phylogenetic tree 
into another is to use a 
{\em nearest-neighbor interchange (NNI) operation}
which we now recall.
For a set $X$ of three or more species or taxa, 
a {\em phylogenetic tree (on $X$)}
is a tree in which every vertex has
degree 1 or 3 with leaf set $X$. A pair
of distinct trees differ by one NNI operation
if one tree can be obtained from the other by swapping 
two of the four subtrees adjacent with an 
interior edge (Figure~\ref{fig:nni}). 
Note that the NNI operation 
is {\em reversible}, i.e. there is a unique NNI 
operation (or {\em reverse} operation) 
that can be applied to get back to the 
original tree. 
A well-known result concerning the NNI operation 
states that given any pair of 
phylogenetic trees $T, T'$ on $X$, 
it is possible to transform $T$ into $T'$ 
by some sequence of NNI operations \citep{rob-71}. 
This implies that NNI operations 
can be used to explore all possible 
phylogenetic trees on a set $X$, 
a useful fact that underpins several 
algorithms for reconstructing phylogenetic trees. 

We now turn to the analogous problem for phylogenetic networks.
More specifically, we shall consider the problem 
of transforming {\em unrooted phylogenetic networks (on X)}, or
{\em networks}, into one another. Such networks are 
connected graphs in which every vertex has
degree 1 or 3 and whose leaf set is $X$ \citep{gam-ber-12}, and
 they have been used to model genome fusion \citep{3d-parsimony}.  
In Figure~\ref{fig:nni:real}, we present 
an example of such a network $R$ that
was referred to as the ``ring of life'' in
a study concerning the genome fusion origin of eukaryotes
\citep[Figure 1]{riv-04}. 
Based on that network, whose
construction employed whole genomes spanning
the diversity of life, the authors concluded that the eukaryotic nuclear
genome has resulted from a fusion of a relative of a protobacterium
($P_{\gamma}$) and a relative of an archeal ecocyte (E).
Note that if a network is a tree, then it is necessarily
a phylogenetic tree on $X$, and so 
networks generalize phylogenetic trees. 
Properties of networks have been studied in \cite{gam-ber-12}, 
and they can be generated by software such
as T-REX~\citep[cf.][]{makarenkov2001t} and 
Splitstree~\citep[cf.][]{huson2006application, huson2005reconstruction}. 
In this paper, we show
that it is possible to extend results concerning 
NNI operations on phylogenetic trees in 
a natural way to networks, which we expect could
lead to applications to network search algorithms
and comparison of networks.

 \def \eleSize{\fontsize{20}{20.4}}
\begin{figure*}[h!]
    \begin{center}
   {\resizebox{0.8\columnwidth}{!}{{\includegraphics{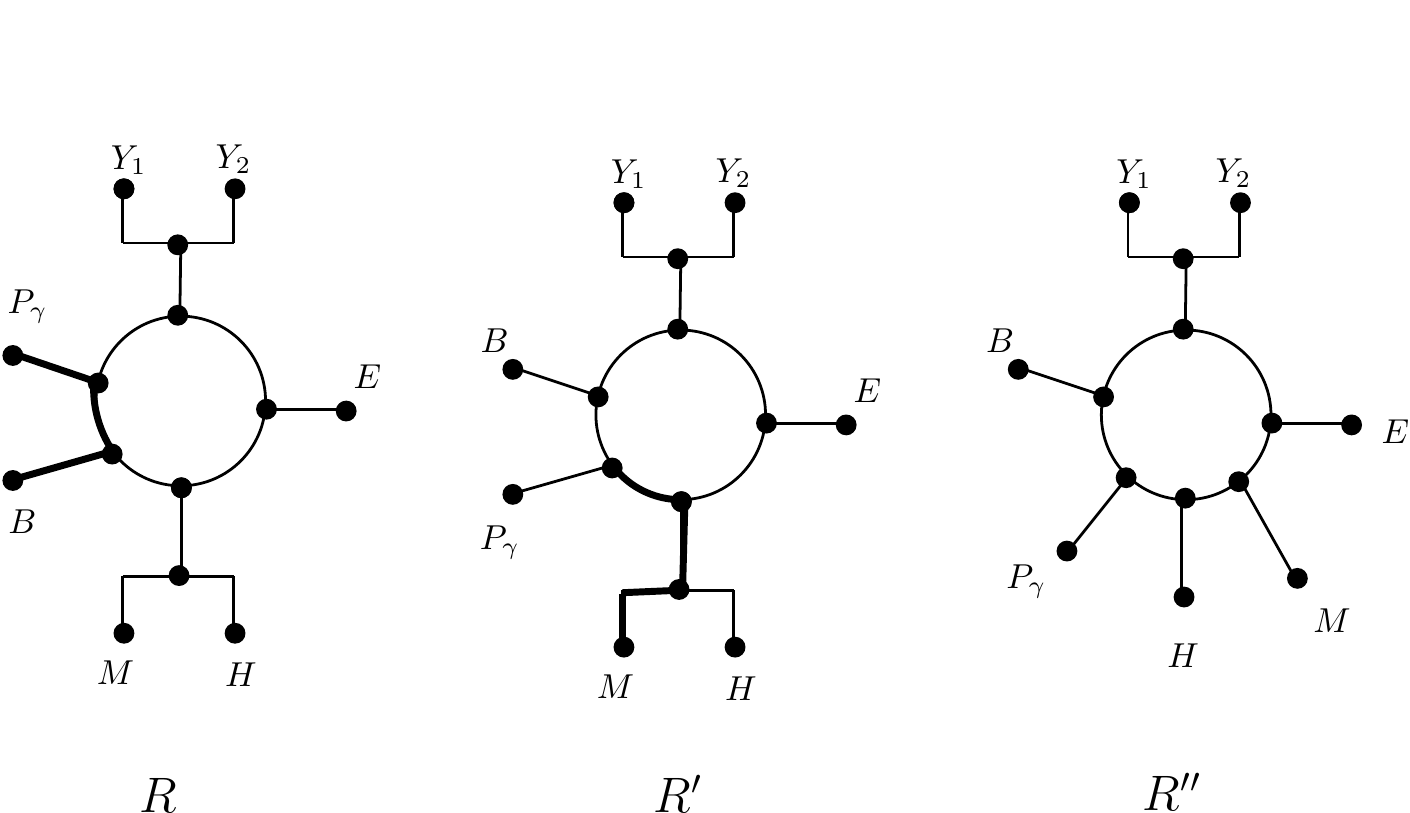}}} }
  \end{center}
\caption{An example of transforming phylogenetic networks using NNI operations. Network $R$ is from \citet[Figure 1]{riv-04}, and is labelled by two yeasts  ($Y_1$, Schizosaccharomyces pombe and $Y_2$, Saccharomyces cerevisiae), a $\gamma$-proteobacterium  ($P_{\gamma}$, Xylella fastidiosa), a bacillus ($B$, Staphylococcus aureus MW2), a halobacterium ($H$, Halobacterium sp. NRC-1), an eocyte ($E$, Sulfolobus tokodaii) and a methanococcus ($M$, Methanosarcina mazei Goe1). The networks $R'$ and $R''$ are obtained from $R$  and $R'$, respectively, by applying an NNI operation to the path highlighted in bold. }
\label{fig:nni:real}
\end{figure*}

Our starting point is to extend NNI operations 
to networks. This is based 
on the observation that the tree $T'$ in Figure~\ref{fig:nni}  
can be obtained from $T$ by replacing the length 3 path 
$v_1,v_2,v_3,v_4$ highlighted in bold
with the path $v_1,v_3,v_2,v_4$ 
whilst preserving all other edges. 
This definition immediately extends to give what we 
shall call an NNI operation
on networks. In particular, we start again with a path 
$v_1,v_2,v_3,v_4$ in a network $N$ on $X$ for which 
neither $\{v_1,v_3\}$ nor $\{v_2,v_4\}$ is an edge, 
and obtain a new network $N'$ on $X$ by
replacing this path with the path $v_1,v_3,v_2,v_4$
(Figure~\ref{fig:nni}).
Note that $N'$ has the same number of vertices
as the original network, and that, just as 
with phylogenetic trees, the NNI operation is reversible.

In the first of our main results we  show
that, just as with phylogenetic trees, we can transform
any network $N$ on $X$ to any other network
$N'$ on $X$ with the same number of
vertices as $N$ by just using NNI operations
(Theorem~\ref{transform2}). 
We illustrate this in Figure~\ref{fig:nni:real}:
here network $R$ is
transformed into the network $R''$ --
which can be considered as an
alternative ring of life hypothesis -- by applying 
a sequence of two NNI operations. 
Even so, it is not possible to explore all 
alternative ways to potentially
represent the ring of life 
using only NNI operations.
To see this, note that we cannot
transform the tree $T$
in Figure~\ref{fig:ring} to the network $R$ 
in Figure~\ref{fig:nni:real} using only NNI operations
since, even though these networks are on the same leaf set, 
they have a different number of vertices
and NNI operations must preserve this number.

\def \eleSize{\fontsize{20}{20.4}}
\begin{figure*}[h]
    \begin{center}
    {\resizebox{0.5\columnwidth}{!}{ {\includegraphics{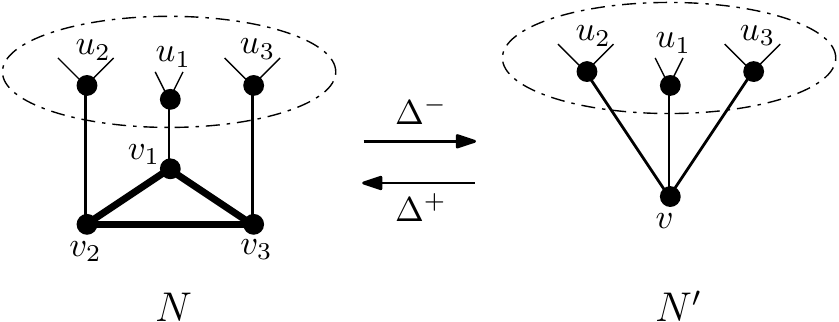}}}  }
  \end{center}
\caption{The two considered triangle operations:   $N'$ is obtained from $N$ by one \tcm~operation that collapses the triangle $v_1,v_2,v_3$  highlighted in bold to the vertex $v$, preserving all other edges that are not contained in this triangle. Conversely, $N$ is obtained from $N'$ by one \tcp~operation which replaces the vertex $v$ by the triangle $v_1,v_2,v_3$. Note that in both networks vertices $u_1, u_2,u_3$  are all distinct and each of them could have degree 1 or 3. }
\label{fig:twoops}
\end{figure*}

In our second main result (Theorem~\ref{transform1})
we show that we need only one additional
operation and its reverse to be able
to transform any network on $X$ into any other network on $X$. 
We call these operations {\em triangle operations} or {\em \tc\, operations}
and picture them in Figure~\ref{fig:twoops}; they involve 
either inserting or removing a triangle (or, more
technically, a length three cycle) 
from a network. We illustrate this 
result in Figure~\ref{fig:ring}.
The tree $T$ in this figure is
one of the phylogenetic trees presented in  
\citet[Figure 1]{riv-04}. 
It is one of the five most probable trees
 computed using the method of conditioned reconstruction.
The ring of life network 
$R$ can be obtained from $T$ by applying one  
$\Delta^+$~operation and two NNI operations
as illustrated in Figure~\ref{fig:ring}.

\def \eleSize{\fontsize{20}{20.4}}
\begin{figure*}[h]
    \begin{center}
    {\resizebox{0.95\columnwidth}{!}{   
{\includegraphics{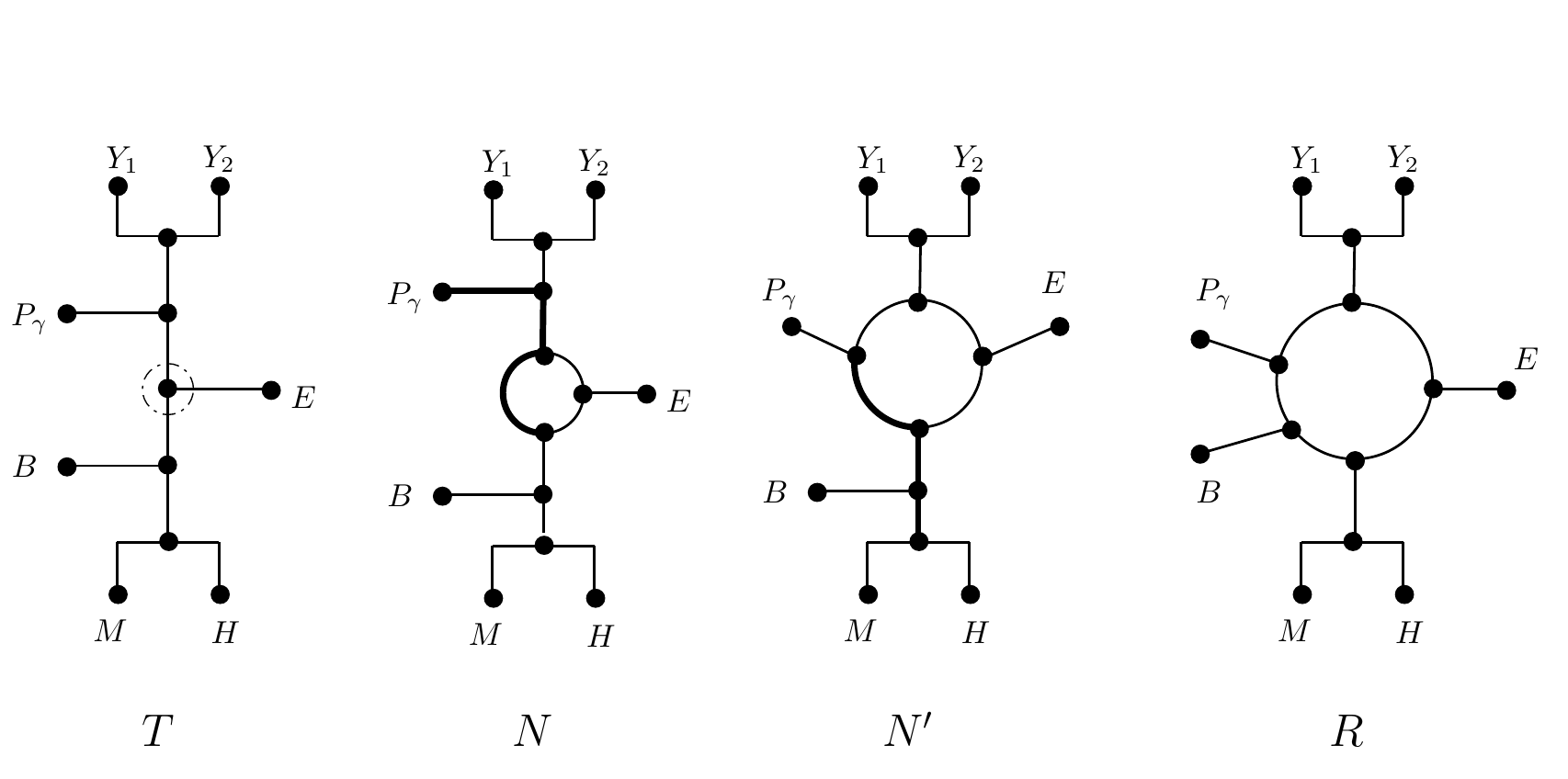}}
}  }
  \end{center}
\caption{Transforming a phylogenetic 
tree into a network using
NNI and \tc~operations. The tree
$T$ appeared in \citet[Figure 1]{riv-04}, and $R$ 
is the same network as in Figure~\ref{fig:nni:real}.
Network $N$ is obtained from $T$ by
a \tcp~operation on the vertex highlighted in 
the dash circle. Networks $N'$ and $R$ are obtained
from $T$ and $N'$, respectively, by applying
an NNI operation to the path highlighted in bold.  
}
\label{fig:ring}
\end{figure*}

We now summarize the rest of the paper. After 
presenting some preliminaries concerning graphs in the next section, 
we show that it is possible to construct 
any network starting either from a tree or from a rather simple
graph with just two vertices (called a generator). 
In Section~\ref{sec:transforming}, we then present the two main results stated above, which we prove
using some new connections between networks and the theory
of so-called cubic graphs that we discovered as part of this work.
Using our results, in Section~\ref{sec:space} we also define a generalization of
tree space, and see how this can be used
to give some new metrics that can be used to compare networks. 
We conclude in Section~\ref{sec:discussion} with a brief discussion of  
future directions and open problems, including
some potential consequences for network searching algorithms.

\section{Preliminaries}
\label{sec:preliminaries}

We begin with some technical preliminaries concerning
graphs, which give the underlying 
structure for phylogenetic networks. 
We assume throughout the paper that $X$ is a set
of species or taxa with three or more elements.

A {\em multigraph} $G$ is an ordered pair $G=(V,E)$, with 
$V$ a set of vertices and $E$ a multiset of edges 
each being an unordered
pair of elements in $V$. 
Note that we assume that multigraphs  
do not contain loops (i.e. edges that connect vertices
to themselves) unless otherwise stated, but they may contain
{\em parallel edges} (i.e. edges with multiplicity greater than 1). 
If $G$ contains no parallel edges we shall call
it a {\em graph}. We also assume that all multigraphs are 
connected. For example, 
in Figure~\ref{fig:ops}(i) we picture a connected  multigraph
with two vertices and three parallel edges.

Assume for the reminder of this section that $G$ is a multigraph.
An edge whose removal disconnects $G$ is called a {\em bridge}
of $G$; if $G$ contains no such edge, then $G$ is called {\em bridgeless}.
A {\em blob} in $G$ is a maximal connected induced submultigraph
of $G$ that does not contain a bridge.
For example, the cycle in the network $R$ in Figure~\ref{fig:nni:real} 
is a blob and so is
the subgraph of the network in Figure~\ref{fig:generator:real}(i)
with all labelled vertices
and their adjacent edges removed. 
A triangle in $G$ is called  {\em isolated} if it does not
share an edge with any other triangle
in $G$, in which case the triangle is a blob of $G$.
We say that $G$ is {\em level-$k$}, $k \ge 0$ an integer,
if a tree can be obtained from $G$ by removing at most
$k$ edges from each blob of $G$. In this case, each blob
of $G$ is at most level-$l$, $0 \le l \le k$. 

The proofs of our main results are closely 
related to some theorems in the theory of
cubic graphs. We say that $G$ is {\em cubic} 
if every vertex in $G$ has degree three. 
%
Note that a cubic multigraph necessarily has an even 
number of vertices. 
Following \citet{bri-clee-13}, we call a 
multigraph $G$ a {\em 1-3 graph} if
all vertices in $G$ have degree either 1 or 3.
These graphs naturally arise when studying phylogenetic trees
as every vertex in such a tree has degree 1 or 3.
Note that in a cubic or a 1-3 graph an 
isolated triangle cannot share a vertex with any
other triangle in that graph and
any parallel edges must have multiplicity two 
except in the (necessarily unique) level-2 cubic graph.

Finally we introduce the concept a generator since, as
we shall see later, they can be viewed as the 
building blocks of networks.
A {\em generator} is either (i) an element of $X$,
(ii) a loop with a single vertex, or (iii) a 
bridgeless, cubic multigraph. 
Note that this is an unrooted equivalent for the definition
of a generator for directed networks presented in \citet{gam-09}.
In case of (i) and (ii),
they are the (unique) generator with level-0 and level-1, respectively.
Also, for any integer $k\geq 2$, 
a generator is level-$k$ if
and only if it has $2(k-1)$ vertices. 


\section{Constructing networks}\label{sec:constructing}

Our first results concern the construction of 
networks by using simple operations. We shall first show 
that is it possible to construct any network starting from a 
phylogenetic tree using just NNI and \tc~operations. 
This is a key step in proving that we can transform any
network into any other using just NNI and \tc~operations.
Note that the T-REX algorithm mentioned in the 
introduction~\citep{makarenkov2001t} also builds 
networks from trees, but it does this 
somewhat differently by adding in new edges to 
a pre-computed tree that optimize some function. 

Although somewhat
technical, we include a proof for the first construction result
since it nicely illustrates the type of considerations
that are required to show that we can transform
one network into another using only NNI and \tc~operations.
For an illustration of this theorem see Figure~\ref{fig:ring}
(which we consider in reverse!).

\begin{theorem}\label{thm:construct}
Suppose that $N$ is a graph such that $X$ is contained in its vertex set.
Then $N$ is a  network on $X$
if and only if there is a phylogenetic tree $T$ on $X$ and 
a finite sequence of NNI and $\tcp$ operations that can be applied one
after the other, starting with $T$, to obtain $N$. 
\end{theorem}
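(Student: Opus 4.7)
The plan is to establish both directions. The ``if'' direction is routine: both the NNI and $\tcp$ operations preserve the three defining properties of a network on $X$---connectivity, the leaf set remaining $X$, and every vertex having degree $1$ or $3$---as one verifies directly from Figures~\ref{fig:nni} and~\ref{fig:twoops}. Since a phylogenetic tree on $X$ is itself a network on $X$, iterating such operations beginning from any phylogenetic tree yields only networks on $X$.

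For the ``only if'' direction I would exploit the fact that NNI is its own inverse and $\tcm$ is the inverse of $\tcp$, reducing the problem to the equivalent \emph{backwards} statement: every network $N$ on $X$ can be transformed into some phylogenetic tree on $X$ by a finite sequence of NNI and $\tcm$ operations. The proof is by induction on the cycle rank $c(N) := |E(N)| - |V(N)| + 1$. The base case $c(N) = 0$ is immediate, since a tree with vertex degrees in $\{1,3\}$ and leaf set $X$ is already a phylogenetic tree on $X$. The inductive step rests on the following claim: whenever $c(N) \geq 1$, some finite sequence of NNIs transforms $N$ into a network $N^\ast$ containing an \emph{isolated} triangle $T$---a $3$-cycle sharing no edge with any other $3$-cycle. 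Granted the claim, the three corners of $T$ automatically have pairwise distinct external neighbours (otherwise two of them would share a neighbour, creating a second triangle overlapping $T$ in an edge), so $\tcm$ legitimately collapses $T$ and yields a network $N'$ on $X$ with $c(N') = c(N) - 1$. The induction hypothesis reduces $N'$ to a phylogenetic tree, and composing the two sequences settles $N$.

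To prove the claim, I would take a shortest cycle $C = v_1 v_2 \cdots v_k$ in $N$. If $k = 3$ and $C$ is isolated there is nothing to do. If $k \geq 4$, the minimality of $|C|$ rules out any chord of $C$, so each $v_i$ has a unique off-cycle third neighbour $u_i$; since each $u_i$ has degree only $3$ and $k \geq 4$, a pigeonhole argument forces some index $i$ with $u_i \neq u_{i+1}$. The NNI applied to the path $u_i, v_i, v_{i+1}, v_{i+2}$ is then legal: its two forbidden edges $\{u_i, v_{i+1}\}$ and $\{v_i, v_{i+2}\}$ are absent by the choice of $i$ and by the minimality of $|C|$, respectively. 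A direct check shows that the resulting network still has cycle rank $c(N)$ but now contains a cycle of length $k - 1$, so the shortest-cycle length has strictly decreased; iterating brings us down to $k = 3$. The remaining delicate sub-case is $k = 3$ with $C$ not isolated, in which two corners of $C$ share an external neighbour and the local structure is $K_4$ minus an edge; here I would perform a short ad hoc sequence of NNIs involving a vertex outside this configuration to break the overlap and expose an isolated triangle.

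The main obstacle is precisely this last sub-case, together with the analogous treatment of denser blobs corresponding to higher-level generators, where naive NNI choices run foul of the no-parallel-edge precondition. This is where the connection between networks and cubic graphs advertised in the introduction becomes decisive: it provides enough flexibility inside any blob to reach, via NNIs, a configuration exposing an isolated triangle. Turning that flexibility into an explicit, uniform NNI recipe that handles every blob type in one stroke is the real work in the proof.
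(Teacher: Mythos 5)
Your overall strategy coincides with the paper's: both directions are handled the same way, the reverse reduction to a tree proceeds by repeatedly shrinking a shortest cycle with NNI moves until it becomes a triangle, collapsing an isolated triangle with \tcm, and inducting on the cycle rank (equivalently, on the number of vertices). Your Case $k\ge 4$ is essentially the paper's Case I, with the same legality check via minimality of the shortest cycle. However, there is a genuine gap at exactly the point you flag as ``ad hoc'': the sub-case where the shortest cycle is a triangle sharing an edge with another triangle. You assert that a short sequence of NNIs ``involving a vertex outside this configuration'' will expose an isolated triangle, but you give no such sequence and no argument that one exists, and this is not a routine verification -- it is the substantive content of the paper's Case II.2, which it resolves by adapting Batagelj's reduction rules for cubic graphs. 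Concretely, writing $u,v$ for the two apexes of the two triangles and $z_1,z_2$ for the endpoints of the shared edge, the paper first uses $|X|\ge 3$ to rule out the $4$-clique (so $\{u,v\}$ is not an edge), and then splits into two cases according to whether $u$ and $v$ have a common neighbour $w_1\notin\{z_1,z_2\}$; in each case a degree-counting argument (again using $|X|\ge 3$, i.e.\ the presence of at least three leaves) is needed to guarantee that a suitable auxiliary degree-$3$ vertex exists before the prescribed NNIs followed by a \tcm\ can be carried out (Figures~\ref{fig:movedown1} and~\ref{fig:movedown2}). Without this case analysis your induction does not close.

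A secondary, more minor point: your concern about ``the analogous treatment of denser blobs corresponding to higher-level generators'' is misplaced. The shortest-cycle argument is entirely local and does not see the blob decomposition; once cycles of length at least $4$ are shrunk by Case I, the \emph{only} residual configuration is the shared-edge pair of triangles, whatever the level of the ambient blob. So the missing piece is precisely, and only, the two-triangle case above; supplying it (or citing Batagelj's rules P1 and P7 as the paper does) would complete your proof.
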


\begin{proof}
If $T$ is a phylogenetic tree on $X$ and $N$ is
obtained from $T$ by a sequence of  NNI and \tcp~operations, then
it is straightforward to see that $N$ is a network on $X$
since applying each of these operations to a 
network results in a network.

Conversely, since each of the $\tcm$ and NNI operations 
preserve the leaf set of a network,
it suffices to show that we can reduce any 1-3 graph
to a binary tree by using a sequence of these two 
operations.

 \def \eleSize{\fontsize{20}{20.4}}
\begin{figure*}[h!]
    \begin{center}
    {\resizebox{0.95\columnwidth}{!}{   
{\includegraphics{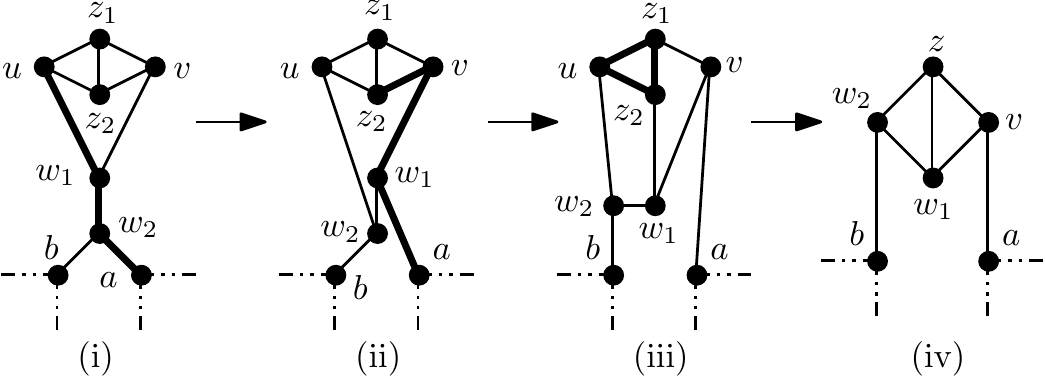}}}
 }
  \end{center}
\caption{Case II, 2.1 in Theorem~\ref{thm:construct}.
(i) and (ii): The edges in bold indicate the path
to which an NNI operation is applied. 
(iii) A $\tcm$ operation is applied to the vertices
of the triangle whose edges are given in bold. (iv) Graph obtained from
(iii); the vertex labeled $z$
is the newly generated vertex.}
\label{fig:movedown1}
\end{figure*}

To prove this, we adapt the proof
presented in \citet[p.~94-95]{bat-81}
for showing that every connected cubic graph
different from a 4-clique  (i.e., a graph with 
4 vertices and all of the 6 possible edges) can be reduced (in terms
of number of vertices) using the rules P1 and P7 
considered in that paper. 

Suppose that $N$ is a 1-3 graph which is not a tree.
Let $C$ be a shortest cycle in $N$.

\noindent {\bf Case I:} $C$ is not a triangle. 

In this case (and as exemplified by performing the NNI 
operation 
on network $R''$ that reverses the one that transforms $R'$
to $R''$ in Figure~\ref{fig:nni:real}), 
we can reduce the number of vertices of $C$
by 1, by applying an NNI operation to a path $v_1,v_2,v_3,v_4$
where $v_1$ is not a vertex in $C$ (and hence $\{v_1,v_2\}$ is not an edge in $C$) and $\{v_2,v_3\}$ and $\{v_3,v_4\}$ are two edges in $C$. Note that this
is possible since $\{v_1,v_3\}$ and  $\{v_2,v_4\}$
cannot be edges in $N$ as this would contradict the 
minimality of $C$. We can therefore repeat this
application of the NNI operation until we are in Case II.

\noindent {\bf Case II:} $C$ is a triangle. 
  
\noindent {\bf Case II, 1:} $C$ is isolated, and so we can
apply the $\Delta^-$ operation.

\noindent {\bf Case II, 2:} $C$ has an edge in common with
another triangle. 

Let $u,v$ be the two (necessarily distinct) vertices in 
the two triangles that are not contained in the common edge. Let $z_1$
and $z_2$ denote the vertices adjacent with that edge.

\def \eleSize{\fontsize{20}{20.4}}
\begin{figure*}[h!]
    \begin{center}
    {\resizebox{0.7\columnwidth}{!}{   
{\includegraphics{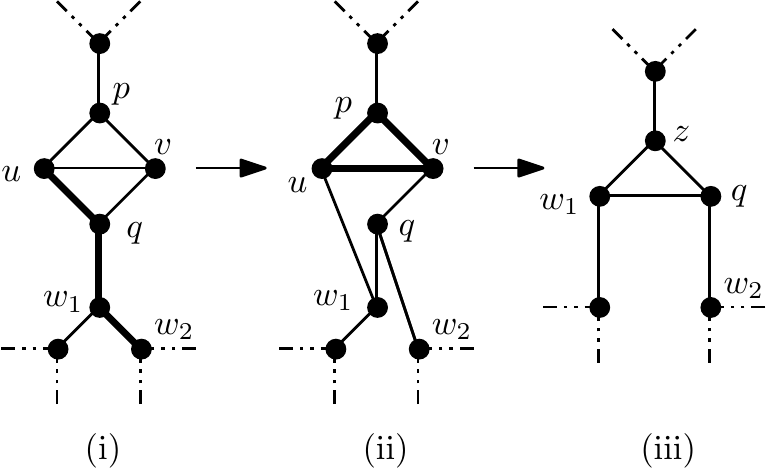}}}
 }
  \end{center}
\caption{Case II, 2.2 in Theorem~\ref{thm:construct}.
(i) The edges in bold indicate the path to which
an NNI operation is applied. (ii) A $\tcm$ operation 
is applied to the vertices
of the triangle whose edges given in bold. (iii) The 
graph obtained from (ii);  the vertex labeled $z$
denotes again the newly generated vertex.}
\label{fig:movedown2}
\end{figure*}

Note that $\{u,v\}$ is not an edge in $N$ since otherwise $N$
would be isomorphic to a 4-clique, which is not possible as
$G$ has at least three degree 1 vertices (since $|X| \ge 3$).

We distinguish now between two cases: $u,v$ have a 
neighbor distinct from $z_1$ and $z_2$ in common, or not.

\noindent {\bf Case II, 2.1:}
Let $w_1\not\in \{z_1,z_2\}$ be the vertex which is a common neighbor of
$u$ and $v$. Note that the vertex $w_2$ adjacent to $w_1$ 
that is not $u$ or $v$ must have degree 3, otherwise
$N$ would have only one vertex with degree 1 as it is a 1-3 graph which
is impossible. We can
now perform the operations in Figure~\ref{fig:movedown1} (just as
in \citet[Case II/2.1]{bat-81}).

\noindent {\bf Case II, 2.2:}
Let $p$ and $q$ be the vertices that are adjacent to
$u$ and $v$, respectively. Then at least one of $p$ or $q$ 
must have degree 3, otherwise $N$ would have only
two vertices with degree 1 implying that $X$ has two elements
 which is impossible. 
Without loss of generality, we many
assume that the degree of $q$ is 3. Now, we can reduce $N$ 
as in Figure~\ref{fig:movedown2} (just as
in \citet[Case II/2.2]{bat-81}).
\end{proof}



We now describe a second approach to constructing networks 
which is based on generators. Although we do not require this
result later on, it is of independent interest and
it also illustrates how other operations could
be potentially used for transforming networks.

Before proceeding, as an illustration of the concept 
of generators, in
Figure~\ref{fig:generator:real}(i) 
we picture a network that was referred to as 
the ``rings of life'' in \citep{ls-13}. The authors
interpreted the bottom right ring as a central system 
that contains the root of life and connects the two other rings.
As we can see this network has a more complicated
internal structure than the ring of life network $R$ 
in Figure~\ref{fig:nni:real}. This structure is further revealed if
we remove all the pendant labelled edges and suppress  
any remaining degree 2 vertices. In this case 
we obtain the level-3 generator which is pictured
in the far right of Figure~\ref{fig:generator:real}.

\def \eleSize{\fontsize{20}{20.4}}
\begin{figure*}[h!]
    \begin{center}
    {\resizebox{0.8\columnwidth}{!}{   
{\includegraphics{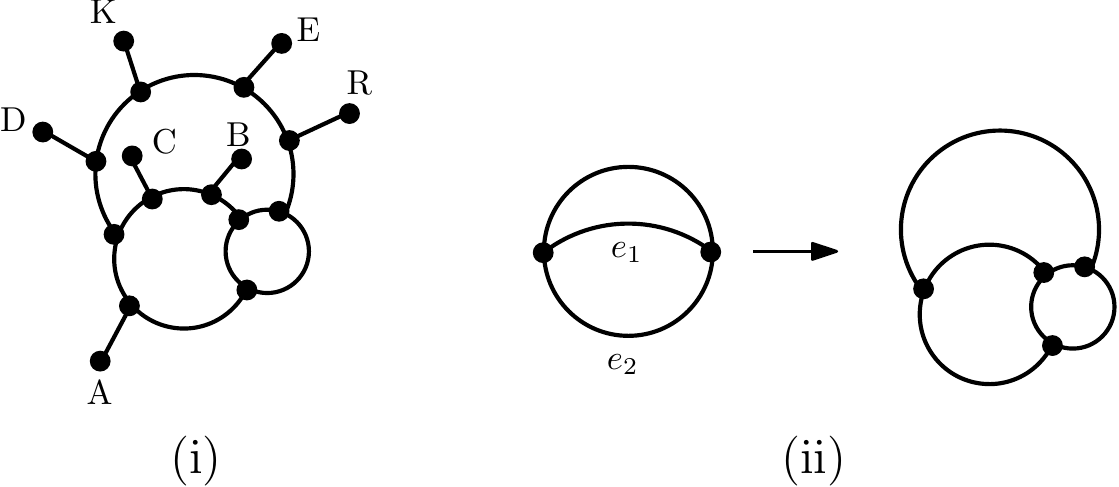}}}
 }
  \end{center}
\caption{(i) The rings of 
life from~\cite{ls-13} on eukaryotes (K), 
double-membrane (Gram-negative) 
prokaryotes (D), eocytes (E), the {\em Euryarchaeota} (R), 
{\em Actinobacteria} (A), and the {\em Firmicutes} 
(the {\em Clostridia}, C, and the {\em Bacilli}, B).  
Note that by removing all pendant edges in this 
network and suppressing degree 2 vertices we obtain
a level-3 generator.
(ii) The level-3 generator underlying the network in (i) can be 
constructed by applying one $H$ operation 
to the edges $e_1$ and $e_2$ of the theta graph on the left of 
Figure~\ref{fig:generator:real}(ii).    
\label{fig:generator:real}}
\end{figure*}

We now explain how level-$k$ generators, $k \ge 2$, 
can be constructed starting with the multigraph 
in Figure~\ref{fig:ops}(i) using just the two simple operations  
in Figure~\ref{fig:ops}(ii) which we 
call the {\em $H$} and {\em $L$ operations}, respectively.
We illustrate the $H$ operation
in Figure~\ref{fig:generator:real}(ii), where it is
used to build the level-3 generator underlying the 
rings of life network pictured in 
Figure~\ref{fig:generator:real}(i). 
This could be useful in exhaustively exploring
alternative network hypotheses such as
the rings of life example in  Figure~\ref{fig:generator:real}.
Note that the 
$H$ operation, which connects two edges by adding an 
additional edge, 
was introduced in 1891 
by J.~de Vries \citep[see, e.g.][]{bri-11}, and the $L$ 
operation, which is applied to one edge,
was considered 
in \citet[Figure 4]{bri-clee-13} 
where it was called operation 3.

\def \eleSize{\fontsize{20}{20.4}}
\begin{figure*}[h!]
    \begin{center}
    {\resizebox{0.7\columnwidth}{!}{   
{\includegraphics{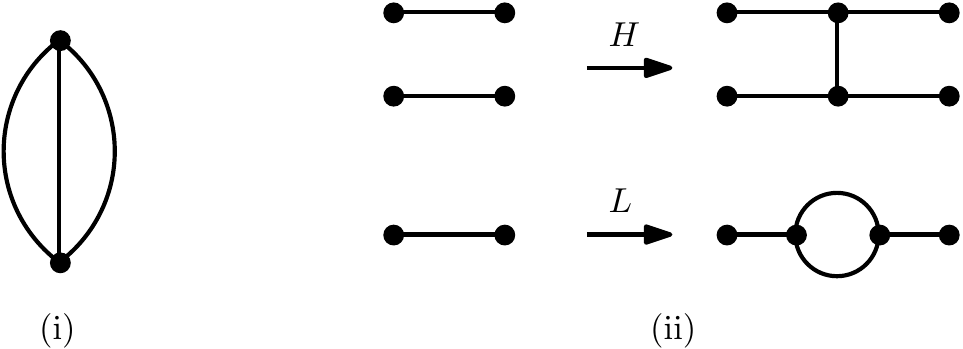}}
}  }
  \end{center}
\caption{(i)  The (necessarily unique) cubic multigraph
with two vertices (called theta graph in \citep[Figure 5]{bri-clee-13}).
(ii) The $H$ and $L$ operations on 
cubic multigraphs. Note that the two edges on which $H$ 
operates are also allowed to have one or two vertices in common (e.g.
they could be parallel).}
\label{fig:ops}     
\end{figure*}

The following result implies
that any level-$k$ generator, $k \ge 2$, can be constructed 
using just $H$ and $L$ operations. It  can be regarded
as an analogue of the approach given in \citet[Section 2]{gam-09} for
constructing generators for rooted networks. 
As we will not require this result later, we give its proof in Appendix A.

\begin{theorem}\label{thm:generators}
Suppose that $G$ is a multigraph and $k\geq 2$. 
Then $G$ is a level-$k$ generator
if and only if $G$ can be constructed from the 
theta graph in Figure~\ref{fig:ops}(ii) via performing a sequence
of $(k-2)$ $L$ or $H$ operations.
\end{theorem}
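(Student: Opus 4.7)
The plan is to proceed by induction on $k$, treating both directions together. By the preliminaries, a level-$k$ generator for $k\geq 2$ is exactly a bridgeless cubic multigraph on $2(k-1)$ vertices; in particular the base case $k=2$ is immediate, since up to isomorphism the theta graph is the unique bridgeless cubic multigraph on two vertices and no operations are applied.

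For the ``if'' direction of the inductive step, I would verify directly that the $H$ and $L$ operations each add exactly two vertices, preserve 3-regularity, and preserve bridgelessness. Cubicness is visible from the definitions. Bridgelessness follows because every edge introduced by either operation can be shown to lie in some cycle, obtained by closing a path through the pre-operation multigraph using two edge-disjoint paths between the relevant endpoints (these exist because the pre-operation multigraph is bridgeless and hence 2-edge-connected). Consequently, applying one operation to a level-$(k-1)$ generator $G'$ produces a bridgeless cubic multigraph on $2(k-1)$ vertices, i.e., a level-$k$ generator.

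The ``only if'' direction is the technical heart. Given a level-$k$ generator $G$ with $k\geq 3$, I must exhibit an inverse $H$ or inverse $L$ operation that reduces $G$ to a bridgeless cubic multigraph on $2(k-2)$ vertices, so that the inductive hypothesis supplies a construction of this reduced generator from the theta graph in $k-3$ steps; appending the corresponding forward operation then recovers $G$ in $k-2$ steps. Here an inverse $H$ deletes an edge $e=\{u,v\}$ with both endpoints of degree 3 and suppresses $u$ and $v$, while an inverse $L$ deletes one of a pair of parallel edges and suppresses the two resulting degree-2 vertices.

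The main obstacle is to prove that at least one such inverse operation can always be applied to $G$ without creating a bridge. I would split into cases depending on whether $G$ contains parallel edges. If $G$ has a pair of parallel edges, apply an inverse $L$ at a suitably chosen parallel pair, noting that a naive choice can create a bridge precisely when the pair sits across a 2-edge-cut of $G$, and so the selection must be made carefully (or one falls back to an inverse $H$ elsewhere in $G$). If $G$ is simple, locate an edge $e$ whose deletion-with-suppression preserves bridgelessness; existence of such an edge in a bridgeless simple cubic multigraph on more than two vertices is a standard structural statement, precisely of the type established in the generation scheme of \citet{bri-clee-13}, and I would invoke or adapt their arguments. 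The only simple cubic bridgeless graph that admits no reducing inverse $H$ is $K_4$, which is however obtained from the theta graph by a single $H$ operation and hence requires no further induction. With both cases handled, the induction closes and the theorem follows.
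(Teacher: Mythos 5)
Your high-level plan (induction on $k$, with the ``if'' direction by checking that $H$ and $L$ preserve cubicness, bridgelessness and add two vertices) matches the paper, and your base case is the same. Where you diverge is the reduction step for a simple $G$: you claim that every bridgeless simple cubic graph other than $K_4$ admits a reverse $H$ operation (delete an edge, suppress the two endpoints) that preserves bridgelessness. The paper does not prove or use this; instead it invokes a precise published theorem of Batagelj (part (T8)), which yields only that $G$ is either a $4$-clique, or is obtained from a smaller bridgeless cubic multigraph by an $H$ operation \emph{or} by a composite operation adding four vertices, and the paper then has to observe that this composite operation factors as an $L$ followed by an $H$. Your stronger single-step claim is in fact true (one can prove it by taking a $2$-edge-cut with a minimal side $A$ and checking that any edge interior to $A$ lies in no $2$-edge-cut, the $3$-edge-connected case being trivial), but as written you have not proved it: you defer it to \citet{bri-clee-13} as ``a standard structural statement,'' and that reference does not obviously supply exactly this lemma in the form you need. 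Since you yourself identify this as the technical heart of the theorem, leaving it as an unproved assertion with an inexact citation is a genuine gap. A smaller error: $K_4$ is \emph{not} an exception --- deleting any edge of $K_4$ and suppressing yields the theta graph --- though this is harmless since you treat $K_4$ separately anyway.

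The parallel-edge case also needs tightening. You worry that ``a naive choice can create a bridge'' and propose to choose the parallel pair carefully or fall back to an inverse $H$, but you neither justify that a good choice always exists nor that the fallback is available. In fact no care is needed: if $u,v$ carry a parallel pair in a bridgeless cubic multigraph on more than two vertices, their remaining neighbours $u',v'$ are distinct (else a bridge arises), and $G-\{u,v\}$ must be connected (otherwise one of the edges $\{u,u'\},\{v,v'\}$ would be a bridge of $G$), so the merged edge $\{u',v'\}$ created by the reverse $L$ always lies on a cycle. This is what the paper asserts directly. So your hedge is unfounded, but as stated it leaves the case unresolved rather than closed. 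With the reduction lemma actually proved and the parallel case settled as above, your route would go through and would arguably be more self-contained than the paper's appeal to Batagelj.
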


We remark that using this result it is in principle possible to 
explicitly construct (and therefore directly count) 
all level-$k$ generators.
For example, it can be checked in this way
that there is one level-1 generator, 
two level-3 generators, and six level-4 generators.
However, this could probably be done more
efficiently using the results in \citet{bri-clee-13}. 


We now briefly explain how generators can be
used to construct networks; full details are given in  
Appendix B.  The construction is based on the 
fact that if $N$ is a network  and $G$ is a blob
in $N$, then $G$ is either a level-0 generator (i.\,e.\,a vertex), 
or the graph obtained from $G$ by suppressing 
all degree 2 vertices is a level-$k$ generator, $k\ge 1$
(see Lemma~\ref{lem:generator}). It follows that we can 
build any network $N$ from the collection of generators
corresponding to the blobs of $N$ by gluing together
the generators one-by-one along new edges onto a 
growing network until
we obtain $N$. We illustrate this process in
 Figure~\ref{fig:generator}.

\def \eleSize{\fontsize{20}{20.4}}
\begin{figure*}[h!]
    \begin{center}
    {\resizebox{0.9\columnwidth}{!}{   
{\includegraphics{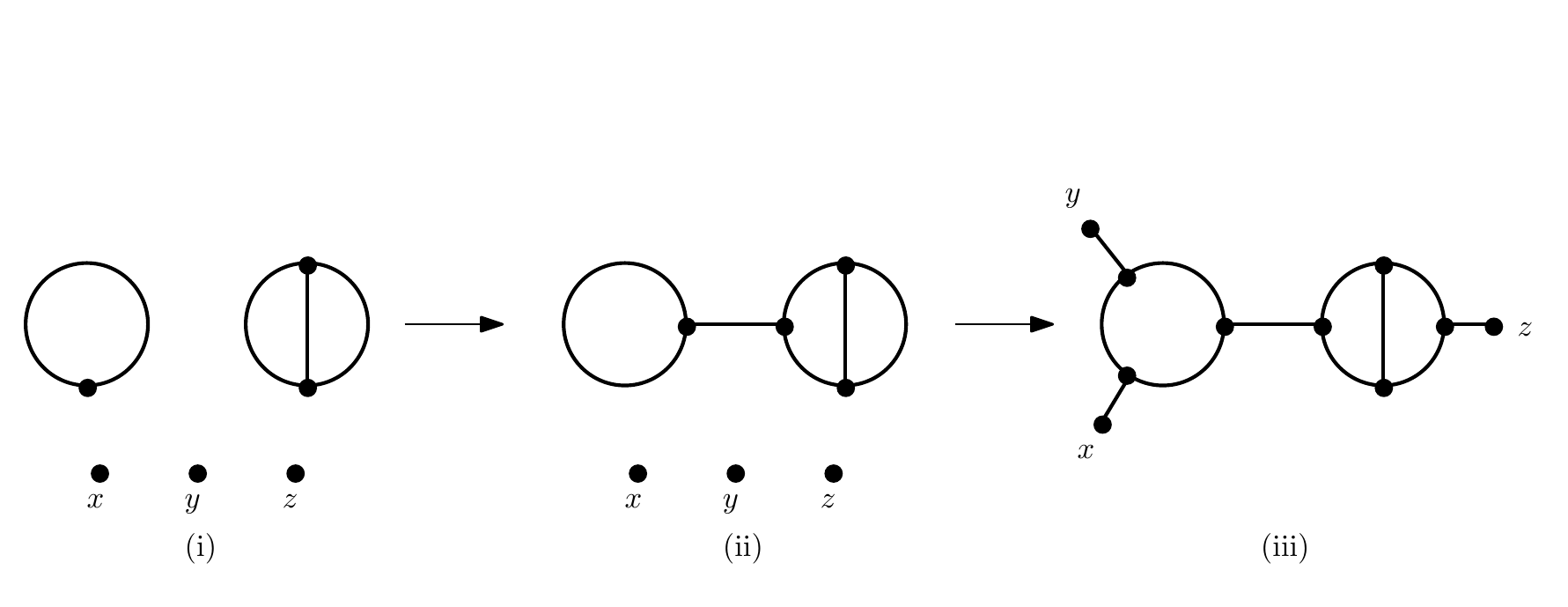}}
}  }
  \end{center}
\caption{(i) A collection of generators. (ii)  Top: a cubic multigraph
obtained by joining the level-1 and level-2 generators in (i).
(iii) A phylogenetic network obtained by 
attaching the level-0 generators to the cubic multigraph in (ii).}
\label{fig:generator}     
\end{figure*}

\section{Transforming phylogenetic networks}
\label{sec:transforming}

We now turn our attention to
the other main results of this paper. 
In the first result we see that somewhat surprisingly 
we can transform any  network to any other
network with the same number of vertices using 
only NNI operations. For an illustration 
of this result, see Figure~\ref{fig:nni:real}.

\begin{theorem}\label{transform2}
Suppose that $N,N'$ are two networks on $X$ with the same number of 
vertices. Then $N$ can be transformed 
into $N'$ by a finite sequence of NNI operations.  
\end{theorem}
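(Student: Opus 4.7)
The plan is to reduce the theorem to the single claim that every network on $X$ with a fixed number of vertices $n$ can be NNI-transformed into one canonical network $C_{n,|X|}$. Because NNI operations are reversible, concatenating a sequence $N \to C_{n,|X|}$ with the reverse of a sequence $N' \to C_{n,|X|}$ immediately yields the desired sequence $N \to N'$.

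The reason such a common target makes sense is that, being 1-3 graphs on $X$ with the same number $n$ of vertices, $N$ and $N'$ automatically agree on their edge count $(3n-2|X|)/2$ and on their cycle rank $k=(n-2|X|+2)/2$, and every NNI preserves all three invariants. I will take $C_{n,|X|}$ to be a fixed caterpillar tree on $X$ carrying $k$ triangles inserted at nested vertices along one distinguished interior edge; equivalently, $C_{n,|X|}$ is obtained from that caterpillar by $k$ successive \tcp~operations performed at a fixed location.

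The reduction $N \to C_{n,|X|}$ will proceed in two phases, modelled closely on the proof of Theorem~\ref{thm:construct}. In the first phase, I repeatedly apply NNIs in the style of Case~I of that proof: a suitable NNI shrinks a shortest non-triangular cycle by one, and iterating this brings the network into a form in which every blob contains a triangle. The crucial point is that Case~I uses \emph{only} NNIs, whereas it was Case~II that required a \tcm. In the second phase I will prove a \emph{triangle-sliding lemma}: given a network containing an isolated triangle $\Theta$, a sequence of NNIs relocates $\Theta$ along any chosen edge path, provided the intermediate configurations satisfy some mild non-adjacency conditions. These local relocations are modelled on, but carefully avoid, the \tcm~steps used in Figures~\ref{fig:movedown1} and \ref{fig:movedown2}. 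Iterating the sliding lemma lets me collect all $k$ triangles onto the distinguished edge of the skeleton, after which Robinson's classical theorem \citep{rob-71} reshapes the remaining tree-like skeleton into the target caterpillar via NNIs on paths disjoint from the gathered triangles.

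The hard step, and the main obstacle, is the triangle-sliding lemma. The NNI restriction that on a path $v_1,v_2,v_3,v_4$ neither $\{v_1,v_3\}$ nor $\{v_2,v_4\}$ may already be an edge makes local moves fragile, so sliding a triangle past a leaf, past another triangle, across a shared edge, or out of a larger blob typically forces a sequence of NNIs during which the triangle is temporarily destroyed and reassembled one step further along. I expect that the cleanest framework will be the correspondence with cubic multigraphs alluded to in the introduction: suppressing the pendant edges of a network yields a leaf-labelled cubic multigraph on which NNI descends to a well-known local switching move whose global connectivity properties are available in the combinatorial literature, so the genuinely new work is the bookkeeping required to lift these switches back to NNIs on the original 1-3 graph that respect the leaf labels.
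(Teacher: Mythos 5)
Your overall strategy---use reversibility of NNI to reduce to reaching a single canonical network $C_{n,|X|}$, and note that vertex count, edge count and cycle rank are preserved---is sound, and your Phase~1 observation that Case~I of Theorem~\ref{thm:construct} uses only NNI operations is correct. However, the argument has a genuine gap: everything rests on the ``triangle-sliding lemma,'' which you explicitly identify as the main obstacle and then do not prove, deferring instead to an unspecified result from the cubic-graph literature plus unspecified ``bookkeeping.'' That lemma is not a routine technicality; it is precisely the mathematical content of the theorem. The paper resolves it by inducting on the number of vertices and using three lemmas adapted from Tsukui's work on cubic graphs: (1)~any non-tree network is NNI-equivalent to one with an isolated triangle; (2)~the networks obtained by performing a \tcp~operation at any two interior vertices of a fixed network are NNI-equivalent (this is the ``relocation'' statement you need, proved by an explicit local NNI sequence for adjacent vertices); and (3)~a lifting lemma showing that an NNI move on the triangle-collapsed network lifts to an NNI-equivalence of networks carrying the triangle. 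Collapsing one triangle, applying the induction hypothesis to the smaller networks, lifting the resulting NNI path, and then invoking~(2) completes the step. Without an analogue of~(2) and~(3), your chain $N \to C_{n,|X|}$ is not established.

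A second, related difficulty: your plan requires simultaneously realizing all $k$ independent cycles as isolated triangles gathered on one distinguished edge. This is strictly harder than what the paper needs (one isolated triangle at a time, removed by \tcm~before recursing). In particular, after shrinking cycles you may produce two triangles sharing an edge---the configuration of Case~II.2 of Theorem~\ref{thm:construct}---and separating or merging such triangles using only NNI operations (no \tcm) is exactly the kind of delicate local manipulation your sketch does not supply; the non-adjacency preconditions on NNI (neither $\{v_1,v_3\}$ nor $\{v_2,v_4\}$ may already be an edge) routinely fail inside small blobs. So while the canonical-form route is plausible and arguably more conceptual than the paper's induction, as written it assumes the theorem's hard kernel rather than proving it.
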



\begin{proof}
Suppose that $N,N'$ are networks on $X$. We put $N \sim N'$
if, starting with $N$, there is a sequence
of NNI operations (possibly of length 0)
that can be applied  to obtain $N'$.
Note that $\sim$ is an equivalence 
relation on the set $\cN(X)$ of all networks on $X$.

Now, suppose $N,N'$ are networks on $X$ with vertex sets
$V(N)$ and $V(N')$, respectively, such that
with $|V(N)|=|V(N')|$.
The proof that $N \sim N'$ must hold is
almost identical to that in \citet[Theorem II]{tsu-96}.
However, we shall  present a sketch proof for
the sake of completeness.

As in the proof of \citet[Theorem II]{tsu-96}
we use three observations which can be proven 
in essentially the same way as
the analogous results for cubic graphs 
as indicated in brackets.

\noindent 
{\bf Observation 1 \citep[cf.][Lemma~4.2]{tsu-96}:}
If $N$ is a network on 
$X$ that is not a phylogenetic tree, then $N \sim N'$ for $N'$ 
a network on $X$ which contains an isolated triangle.

It is straightforward to check that 
Observation~1 holds by employing
an argument similar to that used in 
the proof of Theorem~\ref{thm:construct} Case I
(note that as in that theorem 
we require $|X| \ge 3$ for this to hold).
The argument presented in \citet{tsu-96} 
is similar but no reference is given 
to \citet{bat-81}.

\noindent {\bf Observation 2  \citep[cf.][Lemma~4.3]{tsu-96}:}
If $N$ is a  network on 
$X$, and $u,v \in (V(N)-X)$, then 
$N^u \sim N^v$, for the networks $N^u$ and 
$N^v$ obtained from $N$ by performing a \tcp~operation at
$u$ and $v$, respectively.

\def \eleSize{\fontsize{20}{20.4}}
\begin{figure*}[h!]
    \begin{center}
    {\resizebox{0.9\columnwidth}{!}{   
{\includegraphics{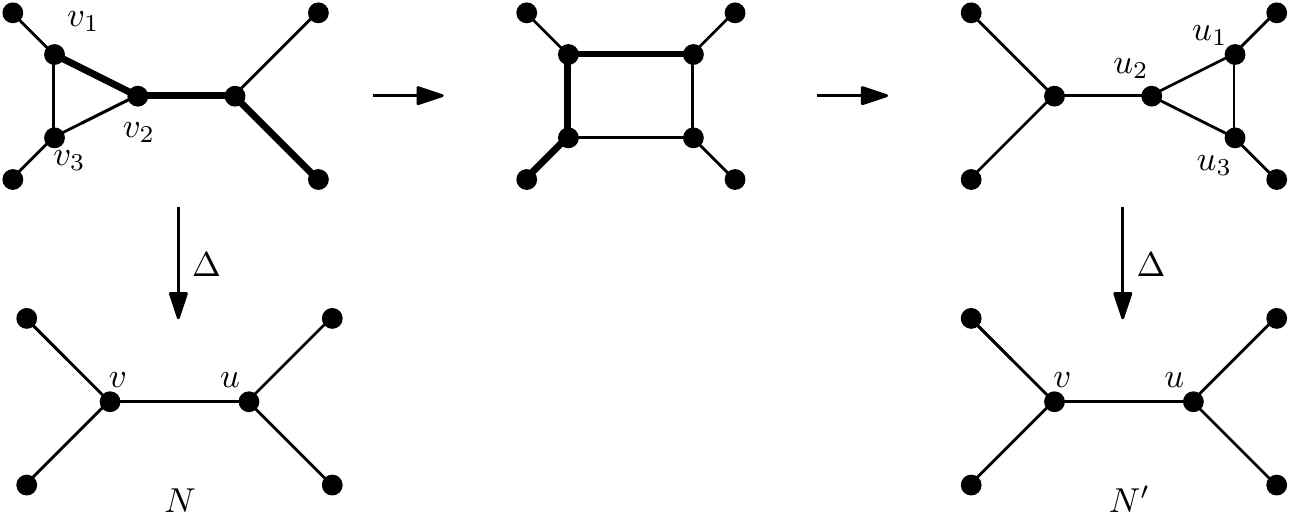}}}
 }
  \end{center}
\caption{An illustration of the proof of Observation 2 in Theorem~\ref{transform2}. The edges in 
bold indicate the paths to which NNI operations are applied.
The \tc~operations are applied to the triangles with vertices 
$v_1$, $v_2$, $v_3$ and $u_1$, $u_2$, $u_3$, respectively.}
\label{fig:obs2}
\end{figure*}

To see that Observation 2 holds, it suffices to assume that $u$
is adjacent to $v$; in this situation the
operation indicated in Figure~\ref{fig:obs2} 
\citep[adapted from][Figure 12]{tsu-96} 
can be applied to see that $N^u \sim N^v$ holds.

\noindent {\bf Observation 3  \citep[cf.][Lemma~4.4]{tsu-96}:}
Suppose that $M$ is a network on 
$X$ with an isolated triangle, and $N$ is obtained by 
performing a \tcm~operation on the vertices 
of this triangle. If $N'$ is 
obtained by performing one NNI operation on $N$, then there
is a network $M'$ on $X$ with $M' \sim M$, 
such that $N'$ can be obtained by performing a \tcm~operation on $M'$.

 \def \eleSize{\fontsize{20}{20.4}}
\begin{figure*}[h!]
    \begin{center}
    {\resizebox{0.9\columnwidth}{!}{   
{\includegraphics{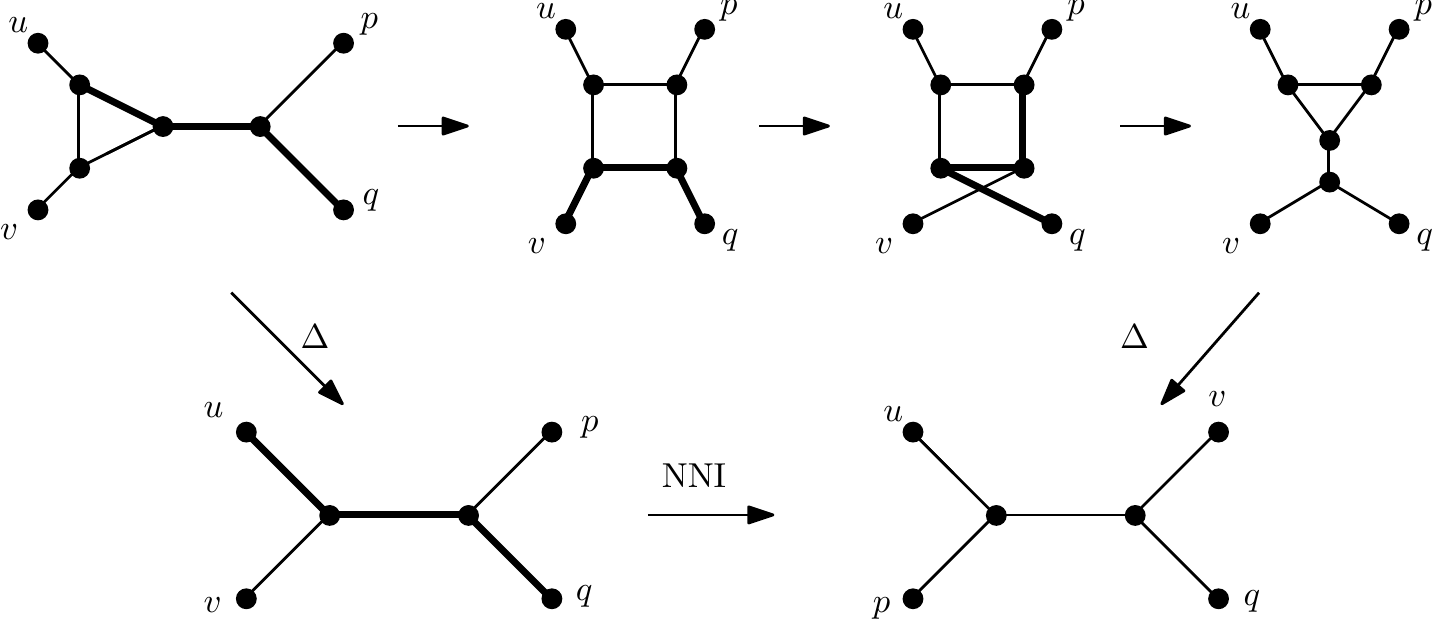}}}
 }
  \end{center}
\caption{An illustration of the proof of Observation 3 in Theorem~\ref{transform2}. The same conventions
as for Figure~\ref{fig:obs2} apply.}
\label{fig:obs3}
\end{figure*}

To see that Observation 3 holds, assume that 
the NNI operation performed on $N$ to obtain $N'$ 
is on the path $v_1,v_2,v_3,v_4$, and that $w$ is the 
vertex in $N$ obtained by applying the $\Delta^-$ operation
to the vertices of an isolated triangle in $M$. 
It is straightforward to see that Observation 3 holds
in case $w \neq v_2,v_3$. In case $w=v_2$ (or $w=v_3$) the 
way to obtain the network $M'$ is indicated in 
Figure~\ref{fig:obs3} \citep[adapted from][Figure 14]{tsu-96}.

Using these three observations we now prove that $N \sim N'$, using 
induction on $m=|V(N)|=|V(N')|$. We shall use the fact that 
if $N$ is a network on $X$, then $|V(N)| \ge 2|X|-2$, with equality holding 
if and only if $N$ is a phylogenetic tree on $X$. 

First note that if $|V(N)| = 2|X|-2$, then both $N$ and $N'$
are trees, and so $N \sim N'$ \citep{rob-71}.
So, suppose that $M \sim M'$ holds for all networks $M$ and
$M'$ on $X$ with
$|V(M)|=|V(M')|=m \ge 2|X|-2$ holding for some $m$. Assume that $N,N'$
are two networks on $X$ such that $|V(N)|=|V(N')|=m+2$. 

By Observation~1, we can assume that $N,N'$ have isolated
triangles $\Gamma$ and $\Gamma'$, respectively. Let $P,P'$ be the
networks obtained by performing an $\Delta^-$~operation on
$N$ and $N'$ on the vertices of  the 
triangles $\Gamma$ and $\Gamma'$, respectively. 

By induction, there exists a sequence $P=P_0,P_1,\dots,P_q=P'$
with network $P_{i+1}$ obtained from $P_i$ by an NNI operation, 
$i=0,1,\dots,q-1$.

Now, by Observation~3, it follows that 
there is a sequence $N_0,N_1,\dots,N_q$
so that $N_0 = N$, $N_i$ has an isolated triangle
$\Gamma_i$ with $\Gamma_0=\Gamma$, $N_i \sim N_{i+1}$, 
and $P_i$ is the network obtained by performing
an $\Delta^-$~operation on the vertices of the triangle $\Gamma_i$ in $N_i$.
In particular, $N_q \sim N$ and the
$\Delta^-$~operation applied to the vertices of the triangle 
$\Gamma_q$ in $N_q$
gives the network $P_q=P'$.

But then there must exist vertices $u,v$ in $V(P_q)-X$ so
that the networks $N_q$ and $N'$ can be obtained from $P_q$ 
by performing a \tcp~operation at $u$ and $v$,
respectively. By Observation~2, $N \sim N'$ follows, as required.
\end{proof}

Using the last theorem we can now prove our second
main result, namely that we can transform
any network on $X$ to any other network on $X$ by
just using NNI and \tc~operations.

\begin{theorem}\label{transform1}
Suppose that $N$ and $N'$ are networks on $X$. Then, 
starting with $N$, there exists 
a finite sequence of NNI and $\tc$   
operations that can be applied to obtain $N'$. 
\end{theorem}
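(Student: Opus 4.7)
The plan is to reduce the general statement to Theorems~\ref{thm:construct} and~\ref{transform2}, by first simplifying each of $N$ and $N'$ to a phylogenetic tree on $X$ and then matching the two trees. The key observation is that every NNI operation and every $\Delta^+$ operation is reversible, with inverse being an NNI operation and a $\Delta^-$ operation, respectively.

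First, by Theorem~\ref{thm:construct} applied to $N$ there exists a phylogenetic tree $T$ on $X$ and a sequence of NNI and $\Delta^+$ operations transforming $T$ into $N$. Reversing this sequence (and replacing each $\Delta^+$ by the corresponding $\Delta^-$, and each NNI by its reverse NNI) yields a sequence of NNI and $\Delta^-$ operations transforming $N$ into $T$. Similarly, Theorem~\ref{thm:construct} applied to $N'$ provides a phylogenetic tree $T'$ on $X$ together with a sequence of NNI and $\Delta^+$ operations transforming $T'$ into $N'$.

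Next, $T$ and $T'$ are two phylogenetic trees on the same leaf set $X$, so they have the same number of vertices (namely $2|X|-2$). Hence Theorem~\ref{transform2} (or equivalently the classical result of~\citet{rob-71}) furnishes a finite sequence of NNI operations that transforms $T$ into $T'$.

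Concatenating these three sequences gives a finite sequence of NNI and $\Delta$ operations that transforms $N$ into $T$, then $T$ into $T'$, and finally $T'$ into $N'$, as required. The only potential obstacle is ensuring that Theorem~\ref{thm:construct} really can be run in reverse, which is immediate from the reversibility of the two operation types noted above.
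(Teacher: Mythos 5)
Your proposal is correct and follows essentially the same route as the paper: reduce both $N$ and $N'$ to phylogenetic trees on $X$ via Theorem~\ref{thm:construct} (using reversibility of the operations), connect the two trees by NNI operations using \citet{rob-71}, and concatenate. Your explicit remark about reversing the sequence from Theorem~\ref{thm:construct} is a slightly more careful phrasing of what the paper states directly, but the argument is the same.
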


\begin{proof}
By Theorem~\ref{thm:construct}, there are phylogenetic 
trees $T,T'$ on $X$ that can be obtained from networks $N,N'$
on $X$, respectively, by applying a sequence of NNI and  
\tc~operations. 
Therefore the theorem follows from the 
fact that there is a sequence of NNI operations that can be 
applied, starting with $T$, to obtain $T'$, as shown in \citet{rob-71}.
\end{proof}

\section{Network spaces}
\label{sec:space}

For phylogenetic trees, NNI operations lead naturally
to the concept of {\em tree space}~\citep[see e.g.][]{bhv01}. 
In its simplest form,
for $X$ fixed, this space can be thought of as the set 
of all phylogenetic trees on $X$ with two trees
being neighbors if and only if they differ
by one NNI operation. In other words,
tree space is a graph with vertices being the
phylogenetic trees on $X$ and edges corresponding
to pairs of such trees which differ by one NNI operation.
As a consequence of the fact that any phylogenetic tree on $X$
can be transformed into any other by a sequence of
NNI operations, it follows that this graph is connected.
This is important for optimization algorithms that search tree space
using NNI operations since it implies that, 
in principle, every tree could be visited. 

Using the operations introduced in this paper, we can 
define network space in a similar manner. Indeed,
we can consider a graph with vertices consisting of 
all networks on $X$ and edges corresponding to pairs
of networks which differ by either one NNI operation
or one \tc~operation. Note that this space is
well defined since NNI and \tc~operations are 
both reversible. We illustrate part of
this network space for $X= \{a,b,c,d\}$ in
Figure~\ref{fig:space}. 

By Theorem~\ref{transform1} it 
follows that network space is connected, just
as with tree space. In fact, tree space is
actually a subgraph of network space, as it consists
of all of the level-0 networks (i.e. phylogenetic trees)
and the NNI operation is just the usual NNI operation for trees.
Moreover, Theorem~\ref{transform2} shows
that  network space has some additional structure since, for any non-negative 
integer $i$, the set $\cN_i(X)$ of all networks on $X$ 
with $2(|X|-1+i)$ vertices forms a 
``tier'' 
in the space in the sense that any pair of
networks in that tier can be joined by edges 
corresponding only to NNI operations (Figure~\ref{fig:space}).
Moreover, two tiers $\cN_i(X)$ and $\cN_j(X)$ are 
adjacent if and only if $|j-i|=1$, that is, the difference 
between the number of vertices for the networks in 
tier $\cN_i(X)$ and that in tier $\cN_j(X)$ is 
exactly two. In particular, there exists a network $N$ in $\cN_i(X)$ 
and a network $N'$ in $\cN_j(X)$ such that $N$ and $N'$ 
differ by precisely one \tc~operation.
Note that partitioning network space into 
tiers is not equivalent to
dividing up the set of networks according
to their level. This alternative way 
to partition rooted networks has been 
considered in \citet{yu2014maximum}  
and also in \citet{hub-15} where an 
alternative space of level-1 networks is defined.

\begin{figure*}[h!]
    \begin{center}
    {\resizebox{0.9\columnwidth}{!}{   
{\includegraphics{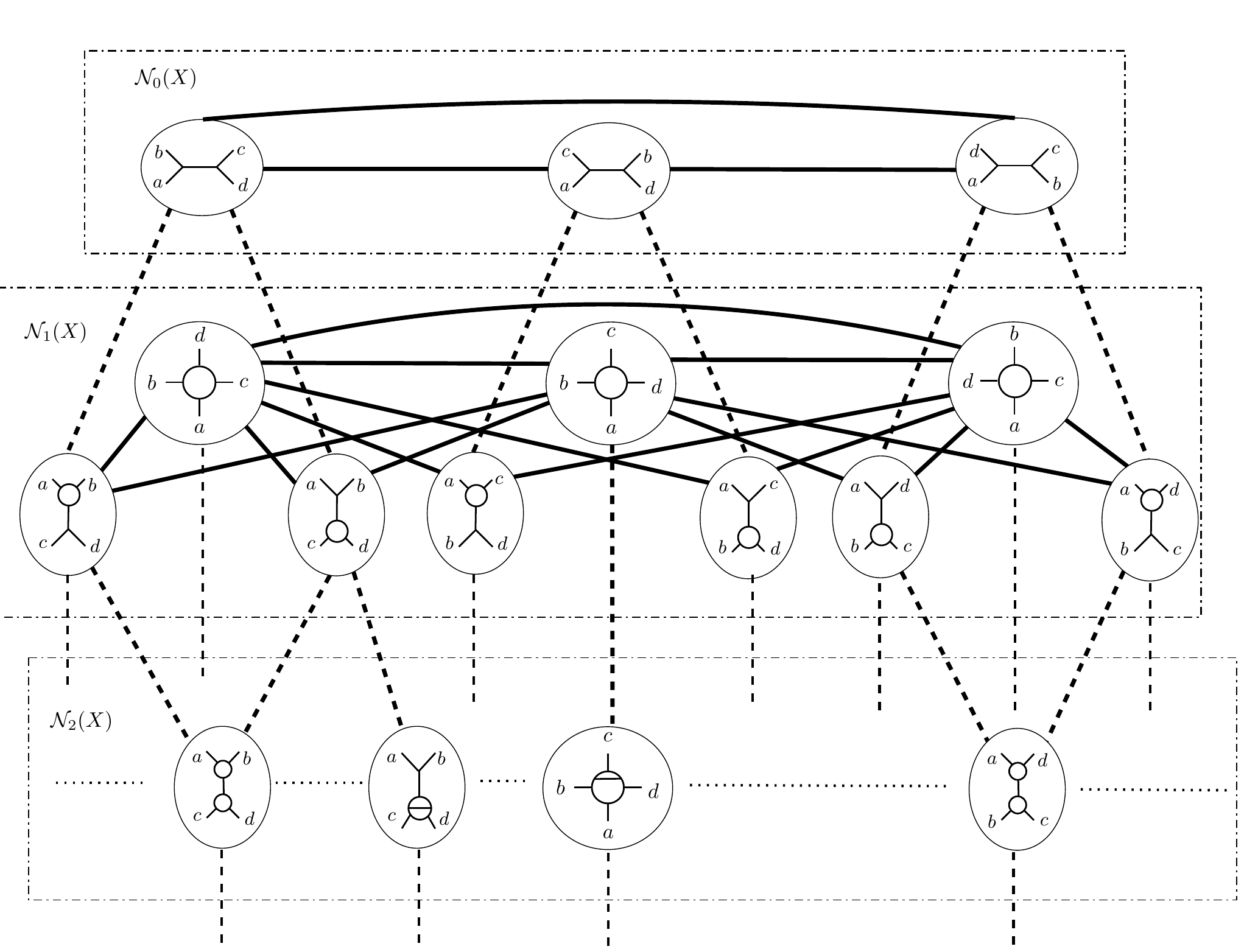}}}
 }
  \end{center}
\caption{Part of the space of phylogenetic networks on the leaf set $\{a,b,c,d\}$. Two networks are connected by a solid line if one 
can be transformed into the other by one NNI operation, and a 
dashed line if this can be done by one \tc~operation. This gives
rise to the tiers 
$\mathcal{N}_i(X) $ described in the text.
%
}
\label{fig:space}
\end{figure*}

Tree spaces have been extensively studied in the 
literature, and they have various important 
properties, many of which have direct bearing 
on the development and performance of tree search
algorithms. As an example of this, tree space
naturally gives rise to the well-known NNI metric on trees
which is the shortest path between any pair
of trees in tree space, or equivalently, the 
least number of NNI operations required to transform one tree into 
another~\citep[cf.][and the references therein]{das-97}.
As a consequence of our results we can 
define new metrics on the set $\cN(X)$ of
all networks on $X$ which arise from
network space in a similar way. 

More specifically, for $N,N' \in \cN_i(X)$ 
we define the distance $d_{\NNI}(N,N')$ between $N$ and $N'$
to be the minimal number of NNI 
operations required to transform $N$ into $N'$,
and for $N,N' \in \cN(X)$ we let the distance $d(N,N')$
between $N$ and $N'$  
be the minimal number of $\tc$ and NNI operations needed to 
transform $N$ into $N'$. 
By Theorems~\ref{transform2} and~\ref{transform1} we 
immediately obtain the following result.

\begin{theorem}\label{metrics}
The distance $d$ is a metric on the set $\cN(X)$. 
In addition, for any non-negative integer $i$, the 
distance $d_{\NNI}$ is a metric on the set $\cN_i(X)$.
\end{theorem}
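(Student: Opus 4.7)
The plan is to verify the four metric axioms (well-definedness/finiteness, the identity of indiscernibles, symmetry, and the triangle inequality) separately for $d$ on $\cN(X)$ and for $d_{\NNI}$ on $\cN_i(X)$; the arguments are essentially the same in both cases, so I would write one unified argument and then point out that it specializes.

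First I would address well-definedness. The distance $d(N,N')$ is the infimum of the lengths of finite transforming sequences of NNI and $\tc$ operations; Theorem~\ref{transform1} guarantees that at least one such finite sequence exists, so the infimum is achieved by a non-negative integer. The analogous statement for $d_{\NNI}$ on $\cN_i(X)$ follows from Theorem~\ref{transform2}, since all networks in $\cN_i(X)$ share the same number of vertices and so can be joined by a finite sequence of NNI operations alone (importantly, every NNI operation preserves the vertex count, so the sequence stays in $\cN_i(X)$).

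Next, symmetry. Both NNI and $\tc$ operations are reversible, as noted just before Theorem~\ref{transform2}: the reverse of an NNI operation is another NNI operation on the transformed path, and $\tcm$ and $\tcp$ are mutually inverse. Hence any transforming sequence from $N$ to $N'$ can be reversed step-by-step into one of the same length from $N'$ to $N$, giving $d(N,N')=d(N',N)$ and likewise $d_{\NNI}(N,N')=d_{\NNI}(N',N)$. For the triangle inequality, given optimal sequences realizing $d(N,N')$ and $d(N',N'')$, concatenating them yields a transforming sequence from $N$ to $N''$ whose length is at most the sum, so $d(N,N'')\le d(N,N')+d(N',N'')$; the same concatenation argument works within $\cN_i(X)$ for $d_{\NNI}$.

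Finally, the identity of indiscernibles. Clearly $d(N,N)=0$ by taking the empty sequence. Conversely, if $d(N,N')=0$ then there is a transforming sequence of length $0$, which means $N=N'$ as labelled graphs. The same reasoning applies to $d_{\NNI}$. The only subtle point worth flagging in the write-up is that one must interpret ``equals'' as equality of labelled graphs on the leaf set $X$ (not up to isomorphism), which is consistent with how we have been treating networks throughout the paper; this is the step I would double-check most carefully, since if one worked up to isomorphism then one would need to argue that no single NNI or $\tc$ operation can yield an isomorphic network, which is true but requires a small additional verification rather than following automatically. With this in hand, all four axioms are established and Theorem~\ref{metrics} follows.
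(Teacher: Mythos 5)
Your proof is correct and matches the paper's intent: the paper derives Theorem~\ref{metrics} directly from Theorems~\ref{transform2} and~\ref{transform1} without spelling out the metric axioms, and your verification of finiteness, symmetry (via reversibility of NNI and $\tc$ operations), the triangle inequality (via concatenation), and the identity of indiscernibles is exactly the routine argument being invoked. Your remark about distinguishing equality of labelled networks from isomorphism is a sensible point of care but does not change the substance of the argument.
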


It would be interesting to better understand 
properties of these metrics such as 
ways to compute or find bounds on the size of the
distance between any pair of networks. Note that the
NNI distance between a pair of phylogenetic trees is 
NP-hard to compute in general \citep{das-97},
and so as tree space is contained in network space 
it is unlikely that there will 
be a polynomial time algorithm to compute
the metric $d$ in general. However, understanding 
properties of $d$ could yield clues
as to how search algorithms on networks may work,
as illustrated by the recent work on Ricci-Ollivier 
curvature of tree space based on the {\em subtree
prune and regraft (SPR)} operation \citep{whidden2015ricci}.

\section{Discussion}
\label{sec:discussion}

In this paper we have shown that it is possible to 
transform any network into any other network through
using a sequence of two elementary operations, one
of which is a generalized version of the NNI operation 
on trees. This allows us to consider spaces of 
networks which generalize spaces of trees, and which
could serve as search spaces for network construction algorithms. 

The NNI and \tc~operations are well-known in the theory of 
cubic graphs, where the NNI operation 
has been called a {\em Whitehead move} \citep{CU} or 
a {\em slide transformation} \citep{tsu-96},
and the \tc~operation is known as the {\em $P1^{-}$ rule} \citep{bat-81} 
and {\em $\Delta$-reduction} \citep{tsu-96}. We believe that links 
between the theories of networks and cubic graphs 
could be potentially of great interest since 
by using theoretical tools for cubic graphs and associated
structures, it should be possible to obtain new approaches
to understand and construct phylogenetic networks. 
For example, spaces of cubic graphs appear in
algebraic geometry \citep{bhv01} and techniques from
that area have proven useful in phylogenetics (e.g.
in understanding phylogenetic invariants~\citep{allman2003phylogenetic}). 
As another illustration of these possibilities, we 
note that properties of cubic graphs were 
recently used by \citet{M15} to help 
count rooted phylogenetic networks.

Although we have focused on unrooted networks (in which each 
vertex has degree 1 or 3), our results serve
to illustrate the possibilities for developing similar 
results for spaces
of more complicated networks. For example, it 
would be very interesting to see how
our results could be extended to rooted networks. 
Some results in this direction are presented in
\citep{rad-12,yu2014maximum}. Our
previous work on spaces of level-1 rooted networks~\citep{hub-15}  
suggests
that this will probably be somewhat more technical.
However, as algorithms are now being developed for
performing likelihood and Bayesian searches on
rooted networks \citep{rad-12,yu2014maximum}, it seems to be well
worth while trying to 
understand general properties of rooted network spaces.

In light of our results, it is also of interest 
to consider how network optimization searches could be performed
based on a combination of either fixing network size (e.g. 
for unrooted networks
using only NNI operations) and/or level (cf. \citet[p. 16450]{yu2014maximum}
for a search strategy on rooted networks 
which fixes level using non-local operations).
In regards to this, we could also consider other types of operations
such as generalizations of the SPR and 
{\em tree bisection and reconnection (TBR)} operations
on trees \citep{all-ste-01a}. Note that, in contrast to NNI 
operations,
these generalizations will not necessarily be local, and that there
are other operations already defined for 
cubic graphs~\citep[cf. e.g.][]{bat-81} which 
could also yield interesting network operations.
Of course, there are pros and cons in deciding which 
operations to consider. 
For example, for tree space, NNI neighborhoods are
smaller than SPR or TBR neighborhoods~\citep[cf. e.g.][]{hw-13} 
which, in practical 
terms, means that searches based on the latter operations
can be more computationally expensive. However, SPR and
TBR operations can potentially yield algorithms that are less likely
to become trapped in local optima. Thus
it would be interesting to better understand 
neighborhoods in network spaces, and to use
their structure to develop new search algorithms.

Another technical issue in this respect 
is that network spaces are in general infinite, as 
it is always possible to 
increase the level of any network by adding in 
vertices and edges (e.g. using the \tc~operation). 
Hence it could be useful to develop
ways to determine bounds for the level of {\em any} network
for any given data set (maybe using concepts such as the 
Bayesian information criterion (BIC)~\citep{yu2014maximum}), and then 
only search spaces of networks with bounded level 
for that data set. This would at least give a finite
search space, such as the space of level-1 networks
that we defined in \citet{hub-15}.


\appendix

\section{Proof of Theorem~\ref{thm:generators}} 

\begin{proof} 
Suppose that $G$ is a multigraph and that $k\geq 2$.
If, starting with the theta graph, $G$ is constructed via a sequence
of $(k-2)$  $L$ or $H$ operations 
 then it is straightforward to
see that $G$ is a level-$k$ generator.

Conversely, suppose that $G$ is a level-$k$ generator.
We proceed by using induction on the level of $G$.
If $k=2$, then $G$ is the theta graph, and so the base case 
clearly holds. So, suppose that any level-$l$ generator, $2 \le l < k$,
can be constructed using $(l-2)$ $H$ or $L$ operations.

If
$G$ contains a 
parallel edge connecting two distinct vertices $u$ and $v$, say,
then it must contain the subgraph pictured in Figure~\ref{fig:configs}(i). 
But then by performing a reverse $L$ operation on this pair of 
parallel edges (i.e., deleting one of the two parallel 
edges and suppressing the resulting degree two vertices) 
we obtain a bridgeless, cubic multigraph $G'$ (as
$G$ is a level-$k$ generator for some $k>2$, and 
hence a bridgeless, cubic multigraph). Note that this operation 
decreases the level of $G$ by one and so,
by induction, $G'$ can be constructed using $k-3$
$H$ or $L$ operations. By induction the theorem therefore holds
in this case.

\begin{figure*}[h!]
    \begin{center}
    {\resizebox{0.7\columnwidth}{!}{   
{\includegraphics{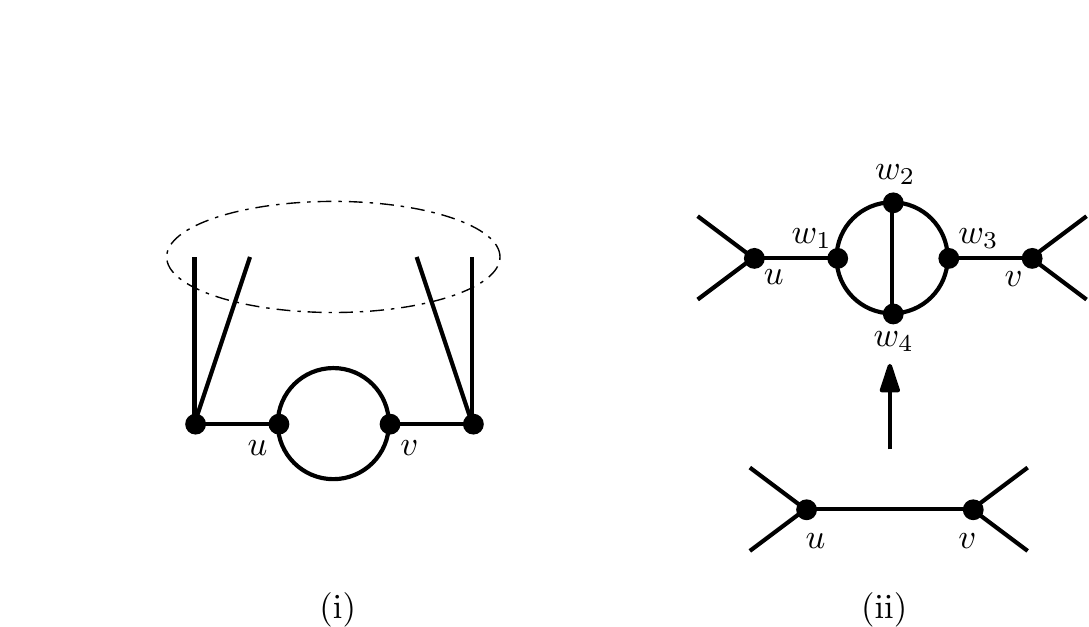}}}
 }
  \end{center}
\caption{Two configurations considered in the proof of 
Theorem~\ref{thm:generators}. The  vertices $w_1,\ldots, w_4$
and their adjacent edges of the form $\{w_i,w_j\}$, $i,j\in\{1,\ldots,4\}$ 
are added by the operation mentioned in the text.}
\label{fig:configs}    
\end{figure*}

So, suppose $G$ does not contain any parallel edges, i.e. 
$G$ is a graph. Then by the main theorem in \citet[part (T8)]{bat-81}, 
it follows that either (a) $G$ is a 4-clique, 
or (b)  there is a
bridgeless, cubic multigraph $G'$ so that $G$ can be obtained
from $G'$ by performing either an $H$ operation or
an operation of the form in Figure~\ref{fig:configs}(ii). 
If case (a) holds, then we can perform a single reverse $H$
operation on $G$ to obtain the theta graph, and the 
theorem is then easily seen to hold.
In case (b) holds and $G$ can be obtained
from $G'$ by performing an $H$ operation, the theorem 
also follows by induction.

So, suppose that (b) holds and that $G$ can be obtained
from $G'$ by using an operation as in Figure~\ref{fig:configs}(ii).
But this operation is equivalent 
to first performing an $L$ operation 
and then an $H$ operation, starting with $G'$.
Since the graph $G''$ obtained
from $G'$ by performing just the $L$ operation is clearly  a bridgeless
cubic multigraph with level $k-3$, and $G$ is obtained
by performing an $H$ operation on $G''$, this case
again follows by induction.
\end{proof}

\section{Constructing networks from generators}

We show how generators can be used to construct networks.
We begin with a simple observation (see  Figure~\ref{fig:generator}).

\begin{lemma}\label{lem:generator}
Suppose that $N$ is a network and $G$ is a blob
in $N$. Then $G$ is either a level-0 generator, 
or the graph obtained from $G$ by suppressing 
all degree 2 vertices (one-by-one until this 
is no longer possible) 
is a level-$k$ generator, $k\ge 1$.
\end{lemma}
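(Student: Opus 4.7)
The plan is to split the argument into two cases according to the size of the blob $G$.

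Case 1 handles the situation $|V(G)| = 1$, with $G = \{v\}$ a single vertex. Using that every vertex of $N$ has degree $1$ or $3$ and the maximality of $G$ as a connected bridgeless induced subgraph, I would argue that $v$ must be a leaf of $N$, so that $v \in X$ and $G$ is a level-$0$ generator by definition.

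Case 2 handles blobs with at least two vertices. Since $G$ is connected and bridgeless, every vertex of $G$ has at least two neighbours inside $G$; combined with the degree restriction in $N$, every vertex of $G$ therefore has degree exactly $2$ or $3$ within $G$, and any degree-$2$ vertex of $G$ has its third $N$-incident edge leaving the blob. I would then apply the suppression operation: remove each degree-$2$ vertex $w$ (with $G$-neighbours $u,v$) and replace its two incident edges by the single edge $\{u,v\}$, iterating until no degree-$2$ vertex remains; call the resulting multigraph $G'$. Two claims require verification: $G'$ is cubic (because suppression leaves every degree-$3$ vertex untouched) and $G'$ is bridgeless (because suppressing a degree-$2$ vertex preserves $2$-edge-connectivity).

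Finally I would identify $G'$ with the appropriate generator. If every vertex of $G$ has degree $2$ then $G$ is itself a cycle and iterated suppression collapses $G$ to a single vertex with a loop, giving the level-$1$ generator. Otherwise $G'$ is a bridgeless cubic multigraph on some even number $2(k-1)$ of vertices with $k \ge 2$, which is a level-$k$ generator by the definition in Section~\ref{sec:preliminaries}. The main obstacle is the careful unpacking of the maximality of $G$ — first in Case 1, to force a single-vertex blob to be a leaf; and in Case 2, both to establish the claimed $2$-or-$3$ degree structure inside $G$ (in particular that a degree-$2$ vertex of $G$ really does admit a third $N$-incident edge leaving the blob) and to propagate bridgelessness from $G$ to $G'$ across the successive suppressions.
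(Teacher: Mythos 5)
Your Case 2 is essentially the paper's own argument: since the blob is bridgeless, every vertex of a blob with at least two vertices has degree $2$ or $3$ in $G$; if all have degree $2$ then $G$ is a cycle and suppression yields the loop on one vertex (the level-1 generator); otherwise suppression yields a bridgeless cubic multigraph, i.e.\ a level-$k$ generator with $k\ge 2$. Your extra care in checking that bridgelessness and cubicity survive the suppressions is a welcome detail that the paper only asserts in passing (``since $G$ has no bridges'').

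The genuine problem is Case 1. The claim that a single-vertex blob must be a leaf of $N$, and hence an element of $X$, is false, and no appeal to maximality can rescue it. Take $N$ to be a phylogenetic tree on $X$ (or any network containing a cherry) and let $v$ be an interior vertex of degree $3$ lying on no cycle: every edge incident with $v$ is a bridge of $N$, so the induced subgraph on $\{v\}$ is a maximal connected induced subgraph containing no bridge --- a blob --- yet $v\notin X$. The paper sidesteps this by reading ``level-0 generator'' in the statement simply as ``a single vertex'' (its proof opens with ``not a level-0 generator (i.e.\ a vertex)''), so the one-vertex case requires no argument at all; under the literal definition of a level-0 generator as an element of $X$, the lemma itself would fail at interior single-vertex blobs, so that cannot be the intended reading. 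Replace your Case 1 by the observation that a one-vertex blob is, by this convention, a level-0 generator, and the rest of your proof stands and coincides with the paper's.
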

\begin{proof}
Suppose that $G$ is a blob in $N$ and that $G$ is
not a level-0 generator (i.e. a vertex). 
Then, as $G$ is a blob, every vertex in $G$ 
must have degree 2 or 3.

If every vertex in $G$ has degree 2, then $G$
must be a cycle. Therefore if we suppress all but one
degree 2 vertex of $G$  we obtain a loop
with a single vertex, i.e. the level-1 generator.

If $G$ contains some vertex with degree 3, then 
the multigraph $G'$ obtained by suppressing all
vertices in $G$ of degree 2 must be a bridgeless
cubic multigraph (since $G$ has no bridges), 
i.e. $G'$ is a level-$k$ generator, for some $k \ge 2$.
\end{proof}

In light of Lemma~\ref{lem:generator} it follows that we 
can construct any network $N$ from some collection 
of generators. To see this, let $\cC_N$ be the 
collection of generators (one for each blob in $N$)
which is obtained by subjecting each blob in
$N$ to the vertex suppression process described in 
the statement of that lemma, if necessary. 
Then it is straightforward to check that
to construct $N$ from $\cC_N$, we can 
start with some generator in $\cC_N$ and then 
continue to graft
generators $G$ in $\cC_N$ that 
have not been previously selected 
onto a growing multigraph $M$
(which possibly contains loops) until we obtain $N$ 
by applying one of the following operations.
(a) If $M$ is the level-0 generator $u$ or
the level-1 generator with vertex $u$, then
to attach a new generator $G$ to $M$ add in the edge $\{u,v\}$,
where $v$ is the unique vertex in $G$ if
$G$ is level-0 or level-1, or $v$ is a new
vertex which is added to $G$ so as to subdivide some edge in $G$.
(b) If $M$ is neither the level-0
nor the level-1 generator, then subdivide some
edge in $M$ by adding in a new vertex $u$, and then, to 
attach a new generator $G$, add in the edge $\{u,v\}$
where $v$ is as in (a).
Note that this process 
can be regarded as an unrooted analogue of Theorem 3.1 in
\citet{gam-09} which describes a
similar way to construct directed phylogenetic networks from
rooted generators.

\section*{Acknowledgments}

We thank Simone Linz for interesting discussions on spaces of phylogenetic networks, and Lars Jermiin and his colleagues
for helpful comments on an earlier version of this manuscript. We also
thank the Institute for Mathematical Sciences, Singapore,
for inviting us to a workshop in July 2015, where 
some of the topics presented in this paper were discussed.

\bigskip
\noindent
{\bf References}







\end{document}